\documentclass[11pt, reqno]{amsart}

\usepackage{amsmath, amsfonts, amssymb, amsbsy, bigstrut, graphicx, enumerate,  upref, longtable, comment}

\usepackage{pstricks}

\usepackage[numbers, sort&compress]{natbib} 

\usepackage[breaklinks]{hyperref}




\newcommand{\ep}{\epsilon}

\newtheorem{thm}{Theorem}[section]
\newtheorem{lmm}[thm]{Lemma}
\newtheorem{cor}[thm]{Corollary}

\theoremstyle{definition}

\newcommand{\ee}{\mathbb{E}}

\newcommand{\ml}{\mathcal{L}}

\newcommand{\rr}{\mathbb{R}}
\newcommand{\smallavg}[1]{\langle #1 \rangle}

\newcommand{\var}{\mathrm{Var}}

\newcommand{\zz}{\mathbb{Z}}

\newcommand{\fpar}[2]{\frac{\partial #1}{\partial #2}}

\newcommand{\bg}{\mathbf{g}}



\newcommand{\mb}{\mathcal{B}}

\numberwithin{equation}{section}

\newcommand{\eq}[1]{\begin{align*} #1 \end{align*}}



\begin{document}
\title[Decay of correlations in RFIM]{On the decay of correlations in the  random field Ising model}
\author{Sourav Chatterjee}
\address{\newline Department of Statistics \newline Stanford University\newline Sequoia Hall, 390 Serra Mall \newline Stanford, CA 94305\newline \newline \textup{\tt souravc@stanford.edu}}
\thanks{Research partially supported by NSF grant DMS-1608249}
\keywords{Random field Ising model, decay of correlations, Gibbs state, phase transition}
\subjclass[2010]{82B44, 60K35}

\begin{abstract}
In a celebrated 1990 paper, Aizenman and Wehr proved that the two-dimensional random field Ising model has a unique infinite volume Gibbs state at any temperature. The proof is ergodic-theoretic in nature and does not provide any quantitative information. This article proves the first quantitative version of the Aizenman--Wehr theorem. The proof introduces a new method for proving decay of correlations that may be interesting in its own right. A fairly detailed sketch of the main ideas behind the proof is also included.
\end{abstract}

\maketitle

\section{Introduction}
Let $\Lambda$ be a finite subset of $\zz^d$. Let $\partial \Lambda$ be the set of all $x\in \zz^d\setminus \Lambda$ that are adjacent to some $y\in \Lambda$. We will refer to $\partial \Lambda$ as the outer boundary (or simply the boundary) of $\Lambda$. Let $\Sigma=\{-1,1\}^\Lambda$ and $\Gamma = \{-1,1\}^{\partial\Lambda}$. An element of $\Sigma$ will be called a configuration and an element of $\Gamma$ will be called a boundary condition. Let $\Phi = \rr^\Lambda$. Elements of $\Phi$ will be called external fields. For $\sigma \in \Sigma$, $\gamma \in \Gamma$ and $\phi\in \Phi$, define the energy of $\sigma$ as
\[
H_{\gamma, \phi}(\sigma) := -\frac{1}{2}\sum_{\substack{x,y\in \Lambda,\\x\sim y}} \sigma_x\sigma_y - \sum_{\substack{x\in \Lambda, \, y\in \partial \Lambda,\\ x\sim y}}\sigma_x\gamma_y- \sum_{x\in \Lambda} \phi_x \sigma_x,
\]
where $x\sim y$ means that $x$ and $y$ are neighbors. Take any $\beta\in [0,\infty]$. The Ising model on $\Lambda$ with boundary condition $\gamma$, inverse temperature $\beta$, and external field $\phi$, is the probability measure on $\Sigma$ with probability mass function proportional to $e^{-\beta H_{\gamma, \phi}(\sigma)}$. When $\beta=\infty$, this is simply the uniform probability measure on the configurations that minimize the energy (the ground states). 

Let us now suppose that $(\phi_x)_{x\in \Lambda}$ are i.i.d.~random variables instead of fixed constants. Then the probability measure defined above becomes a random probability measure. This is known as the random field Ising model (sometimes abbreviated as RFIM). We will refer to the law of $\phi_x$ as the random field distribution.

The random field Ising model was introduced by \citet{imryma75} as a simple example of a disordered system. Imry and Ma predicted that the model does not have an ordered phase in dimensions one and two, but does exhibit a phase transition in dimensions three and higher. Under some conditions on the random field distribution, \citet{bk87, bk88} settled the Imry--Ma conjecture in $d\ge 3$, and \citet{aw89, aw90} settled it in $d\le 2$. For a readable account of these proofs and an up-to-date survey of the literature, see \citet[Chapter 7]{bovier06}.

An important consequence of the Aizenman--Wehr theorem is that the 2D RFIM exhibits decay of correlations at any temperature. One way to state this precisely is the following. Let all notation be as in the beginning of this section, and take any $x\in \Lambda$. Choose any random field distribution, and consider the RFIM on $\Lambda$ at some inverse temperature $\beta\in [0,\infty]$ and some boundary condition $\gamma\in \Gamma$. Let $\smallavg{\sigma_x}_\gamma$ denote the quenched expected value of $\sigma_x$ in this model. Decay of correlations means that
\[
\sup_{\gamma, \gamma'\in \Gamma} |\smallavg{\sigma_x}_\gamma - \smallavg{\sigma_x}_{\gamma'}| \to 0
\]
in probability as $\Lambda\uparrow \zz^2$, with $x$ and $\beta$ remaining fixed. In other words, the effect of the boundary condition on the law of the spin at some interior point becomes negligible as the distance of the point from the boundary becomes large. Under mild conditions on the random field distribution, this result follows from the Aizenman--Wehr theorem, and is in fact equivalent to it. The proof of the Aizenman--Wehr theorem, however, uses ergodic theory in a crucial way and provides no quantitative information. The question of establishing a rate for the decay of correlations in the 2D RFIM has remained open, except at sufficiently small $\beta$ where standard techniques can be used to prove exponential decay. The following theorem gives the first rate of decay at arbitrary $\beta$. 
\begin{thm}\label{mainthm}
Consider the random field Ising model on a set $\Lambda\subseteq \zz^2$ at inverse temperature $\beta \in [0,\infty]$, as defined in the beginning of this section. Let the random field distribution be Gaussian with mean zero and variance~$v$. Take any $x\in \Lambda$ such that $n\ge 3$, where $n$ is the $\ell^\infty$ distance of $x$ from $\partial \Lambda$. Then
\[
\ee\biggl(\sup_{\gamma, \gamma'\in \Gamma} |\smallavg{\sigma_x}_\gamma - \smallavg{\sigma_x}_{\gamma'}| \biggr) \le\frac{ C(1+v^{-1/2})}{\sqrt{\log \log n}},
\]
where $C$ is a universal constant. In particular, the bound has no dependence on $\beta$ and holds even if $\beta=\infty$.
\end{thm}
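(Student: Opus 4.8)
The plan is to trade the supremum over boundary conditions for a single comparison of extremal states and then run a multiscale recursion whose length is governed by how fast the scales must grow.

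\emph{Step 1: reduction via FKG.} For a fixed realization of the field the model is a ferromagnetic Ising model with an external field, hence satisfies the FKG lattice condition, so the quenched measure with boundary condition $\gamma$ is stochastically increasing in $\gamma$; since $\sigma_x$ is an increasing function, $\sup_{\gamma,\gamma'\in\Gamma}|\smallavg{\sigma_x}_\gamma-\smallavg{\sigma_x}_{\gamma'}| = \smallavg{\sigma_x}^+_\Lambda-\smallavg{\sigma_x}^-_\Lambda$, where $+$ and $-$ denote the all-plus and all-minus boundary conditions. Moreover, for any $\ell^\infty$-box $B$ centered at $x$ with $B\cup\partial B\subseteq\Lambda$, conditioning on the configuration on $\partial B$ and using the DLR equations together with FKG (every boundary configuration on $\partial B$ is sandwiched between the $-$ and $+$ ones) yields $\smallavg{\sigma_x}^+_\Lambda-\smallavg{\sigma_x}^-_\Lambda \le \smallavg{\sigma_x}^+_B-\smallavg{\sigma_x}^-_B =: D_B$. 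Hence it suffices to bound $\ee[D_B]$ for a suitably chosen $B$.

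\emph{Step 2: the multiscale scheme.} Fix an absolute constant $r_0$ and set $r_j := r_{j-1}^{1+c}$ for a small absolute constant $c>0$; let $m$ be the largest index with $r_m\le n-1$, so that $m$ is of order $\log\log n$ (this is where the hypothesis $n\ge 3$ enters; for $n$ below an absolute constant the theorem is trivial since the left side is at most $2$). Let $B_j$ be the $\ell^\infty$-box of radius $r_j$ about $x$ and write $\delta_j := \ee[D_{B_j}]$. FKG monotonicity in volume gives $2\ge\delta_0\ge\delta_1\ge\cdots\ge\delta_m\ge 0$, and by Step~1 the theorem reduces to the bound $\delta_m\le C(1+v^{-1/2})/\sqrt{\log\log n}$.

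\emph{Step 3: the key recursion (main obstacle).} The crux is the inequality
\[
\delta_j^2 \le C(1+v^{-1})\,(\delta_{j-1}-\delta_j), \qquad 1\le j\le m,
\]
a local, quantitative form of the Aizenman--Wehr fluctuation bound. To prove it, let $\Delta_j := \beta^{-1}\log\bigl(Z^+_{B_j}/Z^-_{B_j}\bigr)$ be the free-energy difference between the two extremal boundary conditions on $B_j$ (interpreted as a ground-state energy difference when $\beta=\infty$). Differentiating in the field, $\partial_{\phi_y}\Delta_j = \smallavg{\sigma_y}^+_{B_j}-\smallavg{\sigma_y}^-_{B_j}\in[0,2]$ for $y\in B_j$, so in particular $\partial_{\phi_x}\Delta_j = D_{B_j}$. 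Since the field is Gaussian, Gaussian integration by parts gives $\delta_j = \ee[\partial_{\phi_x}\Delta_j] = v^{-1}\ee[\phi_x\Delta_j] = v^{-1}\cov(\phi_x,\Delta_j)$, and Cauchy--Schwarz yields $\delta_j^2 \le v^{-1}\var(\Delta_j)$. It therefore remains to prove $\var(\Delta_j)\le C(1+v)(\delta_{j-1}-\delta_j)$: the fluctuations of the finite-volume free-energy difference at scale $j$ must be controlled by how much the boundary influence on $x$ drops between scales $j-1$ and $j$. For this I would decompose $\Delta_j$ as a martingale, revealing $\phi$ on $B_j$ from $\partial B_j$ inward; the increment at a site $y$ is a conditional expectation of $\smallavg{\sigma_y}^+_{B_j}-\smallavg{\sigma_y}^-_{B_j}$, and for $y\in B_{j-1}$ the monotonicity $\smallavg{\sigma_y}^+_{B_j}-\smallavg{\sigma_y}^-_{B_j}\le\smallavg{\sigma_y}^+_{B_{j-1}}-\smallavg{\sigma_y}^-_{B_{j-1}}$ together with Gaussian concentration should bound the total contribution of the inner sites by a constant times $\ee[D_{B_{j-1}}-D_{B_j}]=\delta_{j-1}-\delta_j$, while the choice $r_j\ge r_{j-1}^{1+c}$ makes the annulus $B_j\setminus B_{j-1}$ contribute a negligible amount. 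Turning the global, ergodic-theoretic Aizenman--Wehr mechanism into this effective two-scale variance inequality --- in particular identifying the right martingale and absorbing the annular error --- is the heart of the new method and the step I expect to be hardest.

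\emph{Step 4: conclusion.} Granting the recursion, summing over $j$ and telescoping gives
\[
\sum_{j=1}^m\delta_j^2 \le C(1+v^{-1})(\delta_0-\delta_m)\le 2C(1+v^{-1}).
\]
Since $(\delta_j)$ is nonincreasing, $m\,\delta_m^2\le\sum_{j=1}^m\delta_j^2\le 2C(1+v^{-1})$, hence $\delta_m\le\sqrt{2C(1+v^{-1})/m}\le C'(1+v^{-1/2})/\sqrt{m}$. Recalling $m\asymp\log\log n$ and that $\ee\bigl(\sup_{\gamma,\gamma'}|\smallavg{\sigma_x}_\gamma-\smallavg{\sigma_x}_{\gamma'}|\bigr)\le\delta_m$ by Step~1 finishes the proof; since nothing above degrades as $\beta\to\infty$, the bound holds uniformly in $\beta$, including $\beta=\infty$.
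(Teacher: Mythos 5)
Your Steps 1, 2 and 4 are sound (the FKG reduction and the monotonicity in volume are exactly the facts \eqref{sigma1}--\eqref{sigma2} used in the paper, and the telescoping arithmetic is fine), but the entire difficulty of the theorem has been compressed into the unproved recursion of Step 3, and the route you sketch for it does not work. The inequality $\var(\Delta_j)\le C(1+v)(\delta_{j-1}-\delta_j)$ forces $\var(\Delta_j)\le 2C(1+v)$, i.e.\ an $O(1)$ bound on the fluctuations of the free-energy difference $\Delta_j=\beta^{-1}\log(Z^+_{B_j}/Z^-_{B_j})$, uniformly in the box size $r_j$. That cannot be true in general: by the first-order term of the Hermite expansion (or your own martingale decomposition), $\var(\Delta_j)\ge v\sum_{y\in B_j}\bigl(\ee[\smallavg{\sigma_y}^+_{B_j}-\smallavg{\sigma_y}^-_{B_j}]\bigr)^2$, and the sites $y$ in the layer adjacent to $\partial B_j$ have magnetization difference of order one (uniformly in $r_j$, for typical $\beta$), so $\var(\Delta_j)$ grows at least like the perimeter $r_j$. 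Your claim that choosing $r_j= r_{j-1}^{1+c}$ makes the annulus ``contribute a negligible amount'' is the gap: without an a priori decay estimate (which is what is being proved), every one of the $\asymp r_j^2$ annulus sites can contribute, and nothing ties the sum over all $y\in B_j$ of these boundary-layer terms to the single-site quantity $\delta_{j-1}-\delta_j\le 2$ at the center $x$. This is precisely the two-dimensional deadlock that makes Aizenman--Wehr delicate: the deterministic bound $|\Delta_j|\le Cr_j$ and the volume-order fluctuation lower bound $\sqrt{v}\,\epsilon\, r_j$ are of the same order, so no variance comparison of the plain free-energy difference on a single box yields a decay rate.

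The paper circumvents exactly this obstruction by a different mechanism: it never tries to bound $\var(F_+-F_-)$ by a decay increment. Instead it expands $\var(F_\pm)$ in the multivariate Hermite basis, bounds the total Fourier mass by $\beta^2vn^2$ via the Gaussian Poincar\'e inequality, pigeonholes over $\asymp\log n/\log\log n$ scales $m_i=(\log n)^i$ to find a scale at which the ``spread-out'' part of the Fourier mass is small, and then compares $\ee(F'_\pm(0))$ to the finite difference $(\ee F_\pm(h)-\ee F_\pm(0))/h$ for a block at that scale via a Taylor expansion controlled by Cauchy--Schwarz; the boundary dependence is then killed by a surgery argument (decoupling the block costs only $O(\beta m)$ in free energy), and an averaging-over-blocks plus translation argument converts the block estimate into the single-site statement. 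If you want to salvage your scheme, you would need to replace $\Delta_j$ by a quantity whose boundary-layer contribution is subtracted off or provably small, which is essentially the role played by the paper's scale selection and block decoupling; as written, Step 3 is a restatement of the theorem's core difficulty rather than a proof of it.
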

The above theorem gives quantitative information on how the quenched law of the spin at a single site depends on the boundary condition. There remains, of course, the possibility that the rate can be improved. There is a folklore conjecture that the true rate of decay is exponentially fast in $n$ at any $\beta$. There is also a competing belief that the rate may be polynomial in $n$  at large $\beta$. Proving either of these conjectures would be a substantial improvement of Theorem \ref{mainthm}. Using the approach of this paper, however,  I do not see any way of getting a better rate than the one given in Theorem~\ref{mainthm}. Any improvement will need a new idea. 

Another way to improve Theorem \ref{mainthm} is by extending it to non-Gaussian random field distributions. Again, the proof in this paper uses the Gaussianity quite heavily, to the extent that I do not see any obvious way to adapt it to a non-Gaussian setting.

\section{Sketch of the proof}
Since the proof of Theorem \ref{mainthm} does not follow any of the standard techniques for proving correlation decay, and is also quite different than the approach of \citet{aw90}, it may be worthwhile to explain the main ideas here, before embarking on the details. The ideas may be applicable to other disordered systems. Unfortunately, I have found it hard to encapsulate the scheme in a few paragraphs, so the sketch itself is a few pages long.

Throughout, $C$ will denote any universal constant. For simplicity, we will assume that $\Lambda$ is an $n\times n$ square, $x$ is the center of the square, and $v=1$. By the well-known FKG  property of the RFIM, $\smallavg{\sigma_x}_\gamma$ is a monotone increasing function of the boundary condition $\gamma$. Therefore, it suffices to show that
\[
 \ee(\smallavg{\sigma_x}_+ - \smallavg{\sigma_x}_-)\le \frac{C}{\sqrt{\log\log n}},
 \]
  where $+$ and $-$ denote the boundary conditions in which all boundary spins are $+1$ and $-1$, respectively. It turns out that by a simple translation invariance argument, it suffices to show that
\begin{equation}\label{mpmm}
\ee(M_+-M_-)\le \frac{Cn^2}{\sqrt{\log \log n}},
\end{equation}
where 
\[
M_+:= \sum_{x\in \Lambda}\smallavg{\sigma_x}_+ \ \text{ and } \ M_-:=\sum_{x\in \Lambda} \smallavg{\sigma_x}_-.
\]
Let $m\ll n$ be a number, to be chosen later. Partition $\Lambda$ into a collection $\mathcal{B}$ of $m\times m$ sub-squares. For each $B\in \mathcal{B}$, let
\[
M_+(B) := \sum_{x\in B}\smallavg{\sigma_x}_+, \ \text{ and } \ M_-(B) := \sum_{x\in B}\smallavg{\sigma_x}_-.
\]
We will show that for most $B\in \mathcal{B}$, 
\begin{equation}\label{meq}
\ee(M_+(B))-\ee(M_-(B))\le \frac{Cm^2}{\sqrt{\log \log n}}.
\end{equation}
Summing over $B$ (assuming that the set of exceptional $B$ is small enough), this  proves \eqref{mpmm}.

Let $F_\gamma$ be the free energy ($=$ logarithm of the partition function) of the model under boundary condition $\gamma$. Fix an $m\times m$ square $B$. Modify the model by replacing $\phi_x$ with $\phi_x+h$ for all $x\in B$, but keeping all other $\phi_x$ the same. Let $F_\gamma(h)$ be the new free energy. This quantity is useful because
\begin{equation}\label{mrep}
M_+(B) = \frac{1}{\beta}F'_+(0) \ \text{ and } \ M_-(B) = \frac{1}{\beta}F'_-(0).
\end{equation}
So, to prove \eqref{meq}, we need to show that
\[
\ee(F_+'(0)) - \ee(F_-'(0)) \le \frac{C\beta m^2}{\sqrt{\log \log n}}. 
\]
We will show this by approximating $F_\pm'(0)$ with $(F_\pm(h)-F_\pm(0))/h$ for some suitable small $h$.

Take any boundary condition $\gamma$ and any $h$, and consider the modified model defined above. Slightly tweak this model by decoupling the links between $B$ and $\Lambda\setminus B$. Let $G_\gamma(h)$ be the free energy of the new model. Due to the decoupling, $G_\gamma(h)$ decomposes as a sum of contributions from inside and outside $B$. The contribution from outside $B$ does not depend on $h$, and the contribution from inside $B$ does not depend on $\gamma$.  Thus, there is some $\alpha(h)$ depending only on $h$ and not on $\gamma$, such that
\[
G_\gamma(h)-G_\gamma(0) = \alpha(h).
\]
We will show that for any $h$ and $\gamma$,
\[
|F_\gamma(h)-G_\gamma(h)|\le 4\beta m.
\]
(Briefly, this holds because  $d=2$ and $|\partial B|\le 4m$.) Combining, we get  that for any $\gamma$,
\[
|(F_\gamma(h)-F_\gamma(0))-\alpha(h)|\le Cm.
\] 
Consequently,
\begin{equation}\label{sketchbd1}
\biggl|\frac{F_+(h)-F_+(0)}{h} - \frac{F_-(h)-F_-(0)}{h}\biggr|\le \frac{C\beta m}{h}.
\end{equation}
It only remains to get a bound for
\begin{equation}\label{sketchbd2}
\biggl|\ee(F_\pm'(0)) - \ee\biggl(\frac{F_\pm(h)-F_\pm(0)}{h}\biggr)\biggr|
\end{equation}
in terms of $m$ and $h$, and then show that $m$ and $h$ can be chosen so that the quantities in both of the above displays are bounded by $C\beta m^2/\sqrt{\log \log n}$.  By \eqref{mrep}, this will complete the proof of \eqref{meq}. We will now sketch this step for the plus boundary condition, the argument being similar for minus boundary. 

We will start by rigorously justifying the Taylor expansion
\[
\ee\biggl( \frac{F_+(h)-F_+(0)}{h}\biggr) - \ee(F_+'(0)) = \sum_{k=2}^\infty \frac{h^{k-1}}{k!} \ee(F^{(k)}_+(0)).
\]
Suppose that this has been justified. It is not hard to see that
\[
F_+^{(k)}(0) = \sum_{x_1,\ldots,x_k\in B} \frac{\partial^k F_+}{\partial \phi_{x_1}\cdots \partial \phi_{x_k}},
\]
where the $F_+$ on the right denotes the free energy of the original model under plus boundary. 
Thus,
\begin{align}\label{remterm}
\sum_{k=2}^\infty \frac{h^{k-1}}{k!} \ee(F^{(k)}_+(0)) &= \sum_{k=2}^\infty \sum_{x_1,\ldots,x_k\in B} \frac{h^{k-1}}{k!}\ee\biggl(\frac{\partial^k F_+}{\partial \phi_{x_1}\cdots \partial \phi_{x_k}}\biggr). 
\end{align}
We will now sketch how to bound this remainder term. Since $F_+$ is a function of standard Gaussian random variables, we can write its $L^2$ norm using the Fourier expansion of $F_+$ in the multivariate Hermite polynomial basis of Gaussian $L^2$ space.  It turns out that the quantities  
\[
\frac{1}{\sqrt{k!}}\ee\biggl(\frac{\partial^k F_+}{\partial \phi_{x_1}\cdots \partial \phi_{x_k}}\biggr),
\]
as $x_1,\ldots,x_k$ range over $\Lambda$,  are its Fourier coefficients. In particular,
\[
\var(F_+) = \sum_{k=1}^\infty \sum_{x_1,\ldots, x_k\in \Lambda}\frac{1}{k!}\biggl(\ee\biggl(\frac{\partial^k F_+}{\partial \phi_{x_1}\cdots \partial \phi_{x_k}}\biggr)\biggr)^2.
\]
Using the Gaussian Poincar\'e inequality, we will show that 
\[
\var(F_+)\le C\beta^2n^2.
\]
Combining the above two displays gives 
\[
\sum_{k=1}^\infty \sum_{x_1,\ldots, x_k\in \Lambda}\frac{1}{k!}\biggl(\ee\biggl(\frac{\partial^k F_+}{\partial \phi_{x_1}\cdots \partial \phi_{x_k}}\biggr)\biggr)^2\le C\beta^2n^2.
\]
From this and Markov's inequality, it follows that if $K$ is large, then for most $B\in \mb$,
\[
\sum_{k=1}^\infty \sum_{x_1,\ldots, x_k\in B}\frac{1}{k!}\biggl(\ee\biggl(\frac{\partial^k F_+}{\partial \phi_{x_1}\cdots \partial \phi_{x_k}}\biggr)\biggr)^2\le K^2\beta^2m^2.
\]
Suppose that our chosen $B$ is one such square. Then it is natural to think about bounding the right side of \eqref{remterm} using the above bound and the Cauchy--Schwarz inequality. A straightforward application of Cauchy--Schwarz gives
\begin{align}
&\biggl|\sum_{k=2}^\infty \sum_{x_1,\ldots,x_k\in B} \frac{h^{k-1}}{k!}\ee\biggl(\frac{\partial^k F_+}{\partial \phi_{x_1}\cdots \partial \phi_{x_k}}\biggr)\biggr|\nonumber\\
&\le \biggl(\sum_{k=2}^\infty \frac{h^{2k-2}m^{2k}}{k!}\biggr)^{1/2}\biggl(\sum_{k=2}^\infty \sum_{x_1,\ldots,x_k\in B} \frac{1}{k!}\biggl(\ee\biggl(\frac{\partial^k F_+}{\partial \phi_{x_1}\cdots \partial \phi_{x_k}}\biggr)\biggr)^2\biggr)^{1/2}\nonumber\\
&\le K\beta m^2e^{hm}. \label{badbd}
\end{align}
Using this as an upper bound in \eqref{sketchbd2} and combining with \eqref{sketchbd1} and \eqref{mrep}, we get
\[
\ee(M_+(B))-\ee(M_-(B)) \le \frac{Cm}{h} + Km^2e^{hm}. 
\]
We would like to choose a small $h$ and a  large $K$ so that this bound is better than the trivial bound $Cm^2$. Unfortunately, there is no way to choose such $h$ and $K$. The best we can do with the above bound is, in  fact, $Cm^2$.

Note that until now, we have used no special property of $m$. Indeed, $m$ could as well have been equal to $n$. To complete the proof, we will show that $m$ can be chosen in a clever way that allows a suitable improvement of our Cauchy--Schwarz step. The rest of this section sketches this step. 
 
For any $x_1,\ldots, x_k\in \rr^2$, let
\[
d(x_1,\ldots,x_k):= \max_{1\le i,j\le k} |x_i-x_j|_\infty,
\]
where $|x|_\infty$ denotes the $\ell^\infty$ norm of $x$. For each $j$, let 
\[
s_j := \sum_{k=1}^\infty \sum_{\substack{x_1,\ldots, x_k\in \Lambda,\\ (\log n)^{j-1}\le d(x_1,\ldots,x_k)< (\log n)^j}}\frac{1}{k!}\biggl(\ee\biggl(\frac{\partial^k F_+}{\partial \phi_{x_1}\cdots \partial \phi_{x_k}}\biggr)\biggr)^2.
\]
Then 
\[
\sum_{j=1}^\infty s_j\le \var(F_+)\le C\beta^2n^2.
\] 
Thus, for any $j$, there exists $i\le j$ such that 
\[
s_i\le \frac{C\beta^2n^2}{j}.
\]
In particular, there exists 
\[
i\le \frac{\log n}{2\log\log n}
\]
such that
\[
s_{i} \le \frac{C\beta^2n^2\log \log n}{\log n}.
\]
As before, this can be used to show that if $K$ is large, then for most $B\in \mb$, 
\begin{align}
&\sum_{k=2}^\infty\sum_{\substack{x_1,\ldots, x_k\in B,\\ (\log n)^{i-1}< d(x_1,\ldots,x_k)\le (\log n)^{i}}} \frac{1}{k!}\biggl(\ee\biggl(\frac{\partial^k F_+}{\partial \phi_{x_1}\cdots \partial \phi_{x_k}}\biggr)\biggr)^2\nonumber\\
&\qquad \le \frac{K\beta^2m^2\log \log n}{\log n}.\label{sismall}
\end{align}
Let $m$ be the integer part of $(\log n)^{i}$, so that $m\le \sqrt{n}$. Then 
\begin{align*}
&\sum_{k=2}^{\infty} \frac{h^{k-1}}{k!}\ee(F_+^{(k)}(0)) =  \sum_{k=2}^\infty  \sum_{x_1,\ldots,x_k\in B} \frac{h^{k-1}}{k!}\ee\biggl(\frac{\partial^k F_+}{\partial \phi_{x_1}\cdots \partial \phi_{x_k}}\biggr)\\
&= \sum_{k=2}^\infty \sum_{\substack{x_1,\ldots, x_k\in B,\\ d(x_1,\ldots,x_k)\le (\log n)^{i-1}}} \frac{h^{k-1}}{k!}\ee\biggl(\frac{\partial^k F_+}{\partial \phi_{x_1}\cdots \partial \phi_{x_k}}\biggr) \\
&\qquad + \sum_{k=2}^\infty\sum_{\substack{x_1,\ldots, x_k\in B,\\ (\log n)^{i-1}< d(x_1,\ldots,x_k)\le (\log n)^{i}}} \frac{h^{k-1}}{k!}\ee\biggl(\frac{\partial^k F_+}{\partial \phi_{x_1}\cdots \partial \phi_{x_k}}\biggr).
\end{align*} 
Separately apply Cauchy--Schwarz to the two parts, and then apply \eqref{sismall} to the second part.   This gives 
\begin{align*}
\biggl|\sum_{k=2}^{\infty} \frac{h^{k-1}}{k!}\ee(F_+^{(k)}(0))\biggr|&\le C\beta m^2 \biggl(\frac{hm}{\log n} + Ke^{h^2m^2}\sqrt{\frac{\log\log n}{\log n}}\biggr).
\end{align*}
This is an improvement of \eqref{badbd}, since it allows choices of $h$ and $K$ such that the right side is $o(m^2)$. 
The proof of \eqref{meq} is now easily completed by choosing $h = \sqrt{\log \log n}/2m$ and $K$ to be a small power of $\log n$. 

\section{Proof details}\label{proofs}
This section contains the detailed proof of Theorem \ref{mainthm}. A key ingredient in the proof of Theorem \ref{mainthm} is the following formula for the variance of a function of independent standard Gaussian random variables. 
\begin{thm}[\cite{cha09}]\label{varformula}
Let $\bg = (g_1,\ldots, g_n)$ be a vector of i.i.d.\ standard Gaussian random variables, and let $f$ be a $C^\infty$ function of $\bg$ with bounded derivatives of all orders. Then 
\begin{equation}\label{varid}
\var(f) = \sum_{k=1}^\infty \frac{1}{k!}\sum_{1\le i_1,\ldots,i_k\le n} \biggl(\ee\biggl(\frac{\partial^k f}{\partial g_{i_1}\cdots \partial g_{i_k}}\biggr)\biggr)^2. 
\end{equation}
The convergence of the infinite series is part of the conclusion. 
\end{thm}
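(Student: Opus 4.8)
The plan is to prove \eqref{varid} via the orthogonal expansion of $f$ in the multivariate Hermite polynomial basis of $L^2$ of the standard Gaussian measure $\mu$ on $\rr^n$. For a nonnegative integer $m$ let $He_m$ be the $m$-th probabilist's Hermite polynomial, normalized so that $He_m'=m\,He_{m-1}$ and $\int He_m^2\,d\mu_1=m!$, where $\mu_1$ is the one-dimensional standard Gaussian; then $h_m:=He_m/\sqrt{m!}$ is the $L^2(\mu_1)$-normalization. For a multi-index $\alpha=(\alpha_1,\ldots,\alpha_n)\in\zz_{\ge 0}^n$ put $h_\alpha(x):=\prod_{i=1}^n h_{\alpha_i}(x_i)$ and $\alpha!:=\prod_i\alpha_i!$. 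It is classical that $\{h_\alpha\}_\alpha$ is a complete orthonormal system in $L^2(\mu)$. Since $f$ has bounded first-order derivatives it is Lipschitz, hence of at most linear growth, so $f\in L^2(\mu)$; write $f=\sum_\alpha\hat f(\alpha)\,h_\alpha$ with $\hat f(\alpha)=\ee\bigl(f\,h_\alpha(\bg)\bigr)$.

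The heart of the matter is the identity
\[
\hat f(\alpha)=\frac{1}{\sqrt{\alpha!}}\,\ee\!\biggl(\frac{\partial^{|\alpha|}f}{\partial g_1^{\alpha_1}\cdots\partial g_n^{\alpha_n}}\biggr).
\]
To see this, use the Rodrigues formula $He_m(x)\varphi(x)=(-1)^m\varphi^{(m)}(x)$, where $\varphi$ is the one-dimensional standard Gaussian density, and integrate by parts $\alpha_i$ times in the variable $x_i$ for each $i$; the boundary terms vanish because every derivative of $f$ is bounded while $\varphi$ and all its derivatives decay at infinity, and Fubini's theorem legitimizes performing this coordinate by coordinate. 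Once this identity is in hand, Parseval's theorem gives $\ee(f)=\hat f(0)$ and $\ee(f^2)=\sum_\alpha\hat f(\alpha)^2$, whence
\[
\var(f)=\sum_{\alpha\ne 0}\hat f(\alpha)^2=\sum_{\alpha\ne 0}\frac{1}{\alpha!}\biggl(\ee\biggl(\frac{\partial^{|\alpha|}f}{\partial g_1^{\alpha_1}\cdots\partial g_n^{\alpha_n}}\biggr)\biggr)^2 .
\]

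It remains to repackage the multi-index sum as a sum over ordered tuples. Fix $k\ge 1$. For a multi-index $\alpha$ with $|\alpha|=k$ there are exactly $k!/\alpha!$ ordered $k$-tuples $(i_1,\ldots,i_k)\in\{1,\ldots,n\}^k$ for which $\partial^k f/\partial g_{i_1}\cdots\partial g_{i_k}$ coincides with the mixed partial derivative indexed by $\alpha$, and they all produce the same value of that derivative. Hence
\[
\sum_{1\le i_1,\ldots,i_k\le n}\biggl(\ee\biggl(\frac{\partial^k f}{\partial g_{i_1}\cdots\partial g_{i_k}}\biggr)\biggr)^2=\sum_{|\alpha|=k}\frac{k!}{\alpha!}\biggl(\ee\biggl(\frac{\partial^{|\alpha|}f}{\partial g_1^{\alpha_1}\cdots\partial g_n^{\alpha_n}}\biggr)\biggr)^2 .
\]
Dividing by $k!$ and summing over $k\ge 1$ turns the displayed formula for $\var(f)$ into the right-hand side of \eqref{varid}; convergence of the series is inherited from Parseval's identity, giving the last assertion of the theorem.

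The only genuinely delicate point is the rigorous justification of the repeated integration by parts together with the attendant applications of Fubini's theorem, and the verification that $f\in L^2(\mu)$ so that the Hermite expansion converges in $L^2$ and Parseval applies — this is precisely where the hypothesis of bounded derivatives of all orders is used. (One could alternatively avoid invoking completeness of the Hermite basis by first establishing the identity for polynomials and then passing to a limit, but the argument above seems the most transparent.)
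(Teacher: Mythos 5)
Your proof is correct and follows exactly the approach sketched in the paper: expand $f$ in the multivariate Hermite basis of Gaussian $L^2$ space, identify the Fourier coefficients as normalized expectations of mixed partial derivatives via Rodrigues-formula integration by parts, and conclude by Parseval. The extra bookkeeping you supply — converting the multi-index sum into the ordered-tuple sum via the $k!/\alpha!$ multiplicity — is a detail the paper leaves implicit, and you carry it out correctly.
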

Although the above version of this identity first appeared in \cite{cha09}, slightly different but equivalent versions were already present in the earlier papers~\cite{houdreperezabreu95, houdreetal98}. The identity has been used recently in \cite{chatterjee09, chabook, chatterjee15}. The proof is quite simple, and goes as follows. Let $\gamma^n$ denote the standard Gaussian measure on $\rr^n$. It is a well-known fact that the $n$-variable Hermite polynomials form an orthonormal basis of $L^2(\gamma^n)$. Using integration by parts, it is not difficult to prove that the Fourier coefficients of $f$ with respect to this orthonormal basis can be expressed as the expectations of mixed partial derivatives of $f$ occurring on the right side of~\eqref{varid}. The identity \eqref{varid} is simply the Parseval identity for this Fourier expansion.

A second ingredient in the proof of Theorem \ref{mainthm} is the Gaussian Poincar\'e inequality, stated below.
\begin{thm}[Gaussian Poincar\'e inequality]\label{pointhm}
Let $f$ and $\bg$ be as in Theorem~\ref{varformula}. Then
\begin{equation}\label{poinid}
\var(f)\le  \ee\biggl(\sum_{i=1}^n\biggl(\fpar{f}{g_i}\biggr)^2\biggr).
\end{equation}
\end{thm}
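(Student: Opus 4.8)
The plan is to deduce the inequality directly from the variance identity \eqref{varid} in Theorem~\ref{varformula}, which is already at our disposal. The starting observation is that the $k=1$ term in \eqref{varid} is exactly
\[
\sum_{i=1}^n \biggl(\ee\biggl(\fpar{f}{g_i}\biggr)\biggr)^2,
\]
and by Jensen's inequality $(\ee(\partial f/\partial g_i))^2 \le \ee((\partial f/\partial g_i)^2)$ for each $i$, so the $k=1$ term is already bounded by $\ee(\sum_{i=1}^n (\partial f/\partial g_i)^2)$. The issue is therefore to control the remaining terms $k\ge 2$. The idea is to apply the same identity \eqref{varid} not to $f$ but to each first-order partial derivative $\partial f/\partial g_i$, which is again a $C^\infty$ function of $\bg$ with bounded derivatives of all orders, hence eligible.

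Concretely, I would write, for each fixed $i$,
\[
\var\biggl(\fpar{f}{g_i}\biggr) = \sum_{k=1}^\infty \frac{1}{k!}\sum_{1\le i_1,\ldots,i_k\le n}\biggl(\ee\biggl(\frac{\partial^{k+1} f}{\partial g_i\,\partial g_{i_1}\cdots \partial g_{i_k}}\biggr)\biggr)^2,
\]
which is a rewrite of \eqref{varid} applied to $\partial f/\partial g_i$. Summing this over $i$ and reindexing $k\mapsto k-1$, the double sum on the right becomes
\[
\sum_{i=1}^n\sum_{k=2}^\infty \frac{1}{(k-1)!}\sum_{1\le i_1,\ldots,i_{k-1}\le n}\biggl(\ee\biggl(\frac{\partial^k f}{\partial g_i\,\partial g_{i_1}\cdots \partial g_{i_{k-1}}}\biggr)\biggr)^2,
\]
and since the index $i$ together with $i_1,\ldots,i_{k-1}$ runs over all ordered $k$-tuples while $1/(k-1)! \ge 1/k!$, this quantity is at least
\[
\sum_{k=2}^\infty \frac{1}{k!}\sum_{1\le i_1,\ldots,i_k\le n}\biggl(\ee\biggl(\frac{\partial^k f}{\partial g_{i_1}\cdots \partial g_{i_k}}\biggr)\biggr)^2,
\]
which is precisely the tail $k\ge 2$ of the series in \eqref{varid} for $\var(f)$. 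Combining with the bound on the $k=1$ term, we obtain
\[
\var(f) \le \ee\biggl(\sum_{i=1}^n\biggl(\fpar{f}{g_i}\biggr)^2\biggr) + \sum_{i=1}^n\var\biggl(\fpar{f}{g_i}\biggr) - \ee\biggl(\sum_{i=1}^n\biggl(\fpar{f}{g_i}\biggr)^2\biggr),
\]
so I should be slightly more careful with the bookkeeping: the cleanest route is to note that $\ee(\sum_i (\partial f/\partial g_i)^2) = \sum_i \var(\partial f/\partial g_i) + \sum_i (\ee(\partial f/\partial g_i))^2$, and that the first sum dominates the $k\ge 2$ tail while the second sum equals the $k=1$ term exactly; adding gives \eqref{poinid}.

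The step I expect to require the most care is the reindexing and the inequality $1/(k-1)!\ge 1/k!$ together with the fact that every ordered $k$-tuple $(i,i_1,\ldots,i_{k-1})$ is counted (with the symmetry of mixed partials making the labelling of which index is ``$i$'' irrelevant) — one must make sure no terms are lost and that the comparison of multinomial-type weights goes in the right direction. There is also a minor point of rigor: one should confirm that the hypotheses of Theorem~\ref{varformula} transfer to $\partial f/\partial g_i$, which is immediate since bounded derivatives of all orders for $f$ gives the same for each of its partials. No truncation or approximation argument is needed beyond what is already contained in Theorem~\ref{varformula}.
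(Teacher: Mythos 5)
Your argument is correct and is essentially the paper's own proof: apply the identity \eqref{varid} to each $\fpar{f}{g_i}$, sum over $i$, and compare the resulting expansion of $\ee\bigl(\sum_{i=1}^n(\fpar{f}{g_i})^2\bigr) = \sum_i \var(\fpar{f}{g_i}) + \sum_i (\ee(\fpar{f}{g_i}))^2$ with the expansion of $\var(f)$ term by term, the $k=1$ term matching exactly and the $k\ge 2$ tail being dominated via $1/(k-1)!\ge 1/k!$. Just delete the garbled intermediate display (as written it would assert $\var(f)\le \sum_i\var(\fpar{f}{g_i})$, which is false, e.g.\ for linear $f$) and keep only the corrected bookkeeping in your final paragraph, which is exactly right.
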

A simple proof of the Gaussian Poincar\'e inequality can be given using Theorem \ref{varformula}, by applying \eqref{varid} to each $\partial f/\partial g_i$ and then adding up the results to get an expansion for the right side of \eqref{poinid}. Comparing this expansion with the expansion for $\var(f)$ easily shows that one dominates the other. For more on the Gaussian Poincar\'e inequality and the related literature, see~\cite[Chapter 2]{chabook}.

In the remainder of this section, the term `plus boundary condition' will mean, as usual, the boundary condition $\gamma$ where each $\gamma_x = 1$. The quenched expectation of the spin  at site $x$ under plus boundary condition will be denoted by $\smallavg{\sigma_x}_+$. If the domain $\Lambda$ needs to be emphasized, we will write $\smallavg{\sigma_x}_{\Lambda, +}$. Minus boundary condition and related notations are defined similarly. An important consequence of the FKG property is that for any boundary condition~$\gamma$,
\begin{equation}\label{sigma1}
\smallavg{\sigma_x}_+\ge\smallavg{\sigma_x}_\gamma\ge \smallavg{\sigma_x}_-.
\end{equation}
From \eqref{sigma1} and the Markovian nature of the RFIM, it follows that for any $x\in\Lambda'\subseteq \Lambda$,
\begin{equation}\label{sigma2}
\smallavg{\sigma_x}_{\Lambda',+}\ge \smallavg{\sigma_x}_{\Lambda,+} \ \text{ and }\ \smallavg{\sigma_x}_{\Lambda',-}\le \smallavg{\sigma_x}_{\Lambda,-}.
\end{equation}
Throughout, we will assume that the random field distribution is Gaussian with mean zero and variance $v$. Instead of $\phi_x$,  the external field at a vertex $x$ will be denoted by $\sqrt{v} \phi_x$, where $(\phi_x)_{x\in \zz^2}$ are i.i.d.~standard Gaussian random variables. Lastly, $C$ will denote any universal constant, whose value may change from line to line.

The main step in the proof of Theorem~\ref{mainthm} is the following lemma.
\begin{lmm}\label{sqlmm}
Let $\Lambda$ be an $n\times n$ square, for some $n\ge 3$. Consider the RFIM on $\Lambda$ at inverse temperature $\beta\in (0,\infty)$. Then there exists $x\in \Lambda$ such that
\[
\ee(\smallavg{\sigma_x}_+-\smallavg{\sigma_x}_-)\le \frac{C(1+v^{-1/2})}{\sqrt{\log \log n}}. 
\]
\end{lmm}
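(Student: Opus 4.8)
The plan is to follow the outline in Section~2 essentially verbatim, but being careful about the bookkeeping. By the FKG inequality \eqref{sigma1} it suffices to control $\ee(\smallavg{\sigma_x}_+ - \smallavg{\sigma_x}_-)$, and by translation invariance (combined with averaging over the center point) it is enough to prove \eqref{mpmm}, i.e.\ to bound $\ee(M_+ - M_-)$ where $M_\pm = \sum_{x\in\Lambda}\smallavg{\sigma_x}_\pm$. So the real target is the local estimate \eqref{meq} for most of the $m\times m$ sub-squares $B$ in a partition $\mathcal{B}$ of $\Lambda$, together with a crude bound $M_+(B) - M_-(B)\le Cm^2$ for the exceptional ones. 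I would first record the identity \eqref{mrep}, $M_\pm(B) = \beta^{-1}F_\pm'(0)$, where $F_\gamma(h)$ is the free energy after the shift $\phi_x\mapsto\phi_x + h$ on $B$; this is just differentiating $\log Z$ in $h$.

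The next block is the decoupling estimate \eqref{sketchbd1}. Introduce $G_\gamma(h)$, the free energy of the model in which all edges between $B$ and $\Lambda\setminus B$ are deleted. Then $G_\gamma(h)$ splits as (a $\gamma$-independent interior term in $h$) plus (an $h$-independent exterior term), so $G_\gamma(h) - G_\gamma(0) = \alpha(h)$ is independent of $\gamma$. The comparison $|F_\gamma(h) - G_\gamma(h)|\le 4\beta m$ follows because deleting an edge changes $-\beta H$ by at most $\beta$ in sup norm and hence shifts $\log Z$ by at most $\beta$ per edge, with $|\partial B|\le 4m$ edges cut; this is uniform in $h$ and $\gamma$. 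Subtracting the plus and minus versions kills $\alpha(h)$ and yields $|(F_+(h)-F_+(0)) - (F_-(h)-F_-(0))|\le 8\beta m$, hence \eqref{sketchbd1} with the $C\beta m/h$ on the right.

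The heart of the proof is bounding \eqref{sketchbd2}, the Taylor-remainder term $\sum_{k\ge2}\frac{h^{k-1}}{k!}\ee(F_\pm^{(k)}(0))$. I would: (i) justify the Taylor expansion and the identification of $\frac{1}{\sqrt{k!}}\ee(\partial^k F_+/\partial\phi_{x_1}\cdots\partial\phi_{x_k})$ with Hermite--Fourier coefficients of $F_+$ as a function of the standard Gaussians $(\phi_x)$ --- this uses that $F_+$ is smooth with controlled derivatives, so that Theorem~\ref{varformula} applies and gives $\var(F_+) = \sum_k\frac1{k!}\sum_{x_1,\dots,x_k}\ee(\cdots)^2$; (ii) bound $\var(F_+)\le C\beta^2 n^2$ via the Gaussian Poincar\'e inequality (Theorem~\ref{pointhm}), since $\partial F_+/\partial\phi_x = \sqrt v\,\beta\smallavg{\sigma_x}_+\in[-\sqrt v\,\beta,\sqrt v\,\beta]$ so the right side of \eqref{poinid} is at most $v\beta^2 n^2$ --- I should be careful here that the variance bound, once divided through, is where the $v^{-1/2}$ ultimately comes from; (iii) \emph{stratify by diameter}: set $s_j := \sum_k\frac1{k!}\sum_{(\log n)^{j-1}\le d(x_1,\dots,x_k)<(\log n)^j}\ee(\cdots)^2$, note $\sum_j s_j\le\var(F_+)\le C\beta^2 n^2$, so some $i\le\frac{\log n}{2\log\log n}$ has $s_i\le C\beta^2 n^2\frac{\log\log n}{\log n}$; (iv) choose $m = \lfloor(\log n)^i\rfloor\le\sqrt n$, and apply Markov over $B\in\mathcal{B}$ to conclude that for all but a $1/K^2$-fraction of the $B$'s the localized version \eqref{sismall} of the $s_i$ bound holds; (v) split the remainder into the $d\le(\log n)^{i-1}$ part and the $(\log n)^{i-1}<d\le(\log n)^i$ part, apply Cauchy--Schwarz to each, using $m\ge(\log n)^{i-1}$ essentially as a diameter cap in the first piece and \eqref{sismall} in the second. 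This gives the improved bound $C\beta m^2\bigl(\frac{hm}{\log n} + Ke^{h^2m^2}\sqrt{\frac{\log\log n}{\log n}}\bigr)$.

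Finally, combining \eqref{sketchbd1} with this remainder bound via \eqref{mrep} gives, for a good $B$,
\[
\ee(M_+(B)) - \ee(M_-(B))\ \le\ C\Bigl(\tfrac{m}{h} + \beta m^2\bigl(\tfrac{hm}{\log n} + Ke^{h^2 m^2}\sqrt{\tfrac{\log\log n}{\log n}}\bigr)\Bigr),
\]
and choosing $h = \sqrt{\log\log n}/(2m)$ and $K$ a small fixed power of $\log n$ makes each term $O(m^2/\sqrt{\log\log n})$ (tracking the $v$-dependence, a factor $1+v^{-1/2}$ appears), which is \eqref{meq}. Summing over the good $B$ and using $M_+(B)-M_-(B)\le Cm^2 \le Cm^2$ for the $O(|\mathcal{B}|/K^2)$ bad ones (which contribute $\le C n^2/K^2 = o(n^2/\sqrt{\log\log n})$ for the chosen $K$) yields \eqref{mpmm}, hence the lemma. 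The main obstacle is step~(iii)--(v): making the diameter stratification and the Markov-over-$B$ argument fit together so that one can \emph{simultaneously} pick a single scale $m$ that is large enough for the decoupling error $C\beta m/h$ to be absorbed yet small enough that the truncated Cauchy--Schwarz genuinely beats the trivial $Cm^2$; the rigorous justification of the Taylor expansion (uniform control of $\ee(F_+^{(k)}(0))$, legitimacy of interchanging sum and expectation) is a secondary technical point that the smoothness/boundedness of $F_+$ should handle.
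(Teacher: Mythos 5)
Your proposal follows the paper's own argument quite closely — what you describe is the rigorous version of the Section~2 sketch, and it matches the paper's Lemmas \ref{poinstep}, \ref{taylorstep} and \ref{remainstep} step for step, so in outline this is indeed the paper's proof. There is, however, one concrete bookkeeping point that your step~(v) glosses over and that would make the per-block bound fail as written. After the Cauchy--Schwarz on the small-diameter part ($d(x_1,\ldots,x_k)<(\log n)^{i-1}$), the second factor is $\sqrt{s_0(B)}$, the within-$B$ sum of $\frac{1}{k!}\rho_\pm^2$ over tuples of diameter less than $(\log n)^{i-1}$; the combinatorial gain from $m_{i-1}\approx m/\log n \ll m$ in the first factor is only useful if one also has $\sqrt{s_0(B)}\lesssim K\beta\sqrt{v}\,m$, and that is not automatic. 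The paper gets this by including the event $s_0(B)\le K^2\bar{s}_0$ alongside $s_1(B)\le K^2\bar{s}_1$ in the definition of the good set $\mathcal{B}_0$, paying a second $O(1/K^2)$-fraction of bad blocks via Markov. Your step~(iv) applies Markov only to the intermediate-scale piece $s_1(B)$. The fix is one line, but without it the claimed $C\beta m^2\,\tfrac{hm}{\log n}$ bound for the small-diameter piece does not follow; once you add it, you pick up the same extra factor of $K$ that the paper's Lemma~\ref{remainstep} carries on both pieces. Also, the phrase ``using $m\ge(\log n)^{i-1}$ essentially as a diameter cap'' has the point inverted: what makes the first piece small is that the cap $(\log n)^{i-1}\approx m/\log n$ is much \emph{smaller} than $m$, which is what reduces the tuple count from $m^{2k}$ to $m^2(2m_{i-1})^{2(k-1)}$.
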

The proof of Lemma \ref{sqlmm} is somewhat long and complicated, and is therefore divided into several steps. Throughout, fix $\beta \in (0,\infty)$ and an $n\times n$ square $\Lambda$. Let $F_+$ be the free energy ($=$ the logarithm of the partition function) of the RFIM on $\Lambda$ with plus boundary condition, at inverse temperature $\beta$. For any $k$ and any $x_1,\ldots,x_k\in \Lambda$, let
\[
\rho_+(x_1,\ldots, x_k) := \ee\biggl(\frac{\partial^kF_+}{\partial \phi_{x_1}\cdots \partial \phi_{x_k}}\biggr). 
\]
Let $\rho_-(x_1,\ldots, x_k)$ be defined analogously, for the RFIM on $\Lambda$ with minus boundary condition. The following lemma is the first step in the proof of Lemma \ref{sqlmm}. 
\begin{lmm}\label{poinstep}
Let $\rho_+$ be defined as above. Then
\begin{equation*}
 \sum_{k=1}^\infty\frac{1}{k!}\sum_{x_1,\ldots, x_k\in \Lambda}\rho_+(x_1,\ldots,x_k)^2\le \beta^2vn^2,
\end{equation*}
and the same inequality holds for  $\rho_-$ as well. 
\end{lmm}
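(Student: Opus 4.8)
The plan is to recognize that the left-hand side is exactly the variance formula of Theorem~\ref{varformula} applied to $F_+$, viewed as a function of the i.i.d.\ standard Gaussians $(\phi_x)_{x\in\Lambda}$. Indeed, by definition $\rho_+(x_1,\ldots,x_k) = \ee(\partial^k F_+/\partial\phi_{x_1}\cdots\partial\phi_{x_k})$, so the claimed sum is precisely $\var(F_+)$ provided the hypotheses of Theorem~\ref{varformula} hold. Thus the bulk of the argument is: (i) check that $F_+$ is a smooth function of the $\phi_x$ with bounded derivatives of all orders, so that Theorem~\ref{varformula} legitimately yields $\var(F_+) = \sum_{k\ge1}\frac{1}{k!}\sum_{x_1,\ldots,x_k\in\Lambda}\rho_+(x_1,\ldots,x_k)^2$; and (ii) bound $\var(F_+)$ by $\beta^2 v n^2$ using the Gaussian Poincar\'e inequality (Theorem~\ref{pointhm}).

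For step (i): the partition function is $Z_+ = \sum_{\sigma\in\Sigma} e^{-\beta H_{+,\sqrt v\,\phi}(\sigma)}$, a finite sum of exponentials of affine functions of $\phi$, hence a strictly positive real-analytic function of $\phi$; therefore $F_+ = \log Z_+$ is $C^\infty$. For boundedness of derivatives, note $\partial F_+/\partial\phi_x = \beta\sqrt v\,\smallavg{\sigma_x}_+ \in [-\beta\sqrt v,\beta\sqrt v]$, and more generally each mixed partial of $F_+$ is $\beta\sqrt v$ (or a power thereof) times a polynomial in joint cumulants of the bounded spin variables $\sigma_x$ under the Gibbs measure, so all derivatives are bounded uniformly in $\phi$. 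This makes Theorem~\ref{varformula} applicable. (One should be slightly careful about the case $\beta=\infty$, but here $\beta\in(0,\infty)$ is fixed, so there is no issue.)

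For step (ii): by the chain rule, $\fpar{F_+}{\phi_x} = \beta\sqrt v\,\smallavg{\sigma_x}_+$, and since $\sigma_x\in\{-1,1\}$ we have $|\smallavg{\sigma_x}_+|\le1$, hence $\bigl(\fpar{F_+}{\phi_x}\bigr)^2 \le \beta^2 v$. Summing over the $n^2$ sites of $\Lambda$ and taking expectations,
\[
\ee\biggl(\sum_{x\in\Lambda}\Bigl(\fpar{F_+}{\phi_x}\Bigr)^2\biggr) \le \beta^2 v n^2.
\]
The Gaussian Poincar\'e inequality then gives $\var(F_+)\le \beta^2 v n^2$, and combining with the identity from step (i) yields the claim. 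The argument for $\rho_-$ is identical, replacing plus by minus boundary condition throughout. I do not expect any genuine obstacle here; the only point requiring a little care is the verification that the derivative-boundedness hypothesis of Theorem~\ref{varformula} is met, which is where one uses that the spins are bounded and $\beta$ is finite.
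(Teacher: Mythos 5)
Your proof is correct and follows essentially the same route as the paper: identify the sum as $\var(F_+)$ via Theorem \ref{varformula}, note $\partial F_+/\partial\phi_x=\beta\sqrt{v}\smallavg{\sigma_x}_+$, and bound the variance by $\beta^2vn^2$ using the Gaussian Poincar\'e inequality. The extra verification of the smoothness and bounded-derivative hypotheses is a welcome bit of care that the paper leaves implicit.
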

\begin{proof}
By Theorem \ref{varformula}, 
\[
\var(F_+) = \sum_{k=1}^\infty\frac{1}{k!}\sum_{x_1,\ldots, x_k\in \Lambda}\rho_+(x_1,\ldots,x_k)^2. 
\]
On the other hand,
\[
\fpar{F_+}{\phi_x} = \beta \sqrt{v} \smallavg{\sigma_x}_+.
\]
By Theorem \ref{pointhm}, this shows that
\[
\var(F_+)\le \beta^2 v n^2. 
\]
Combining the above observations, we get the desired inequality. 
Retracing the above steps, it is clear that the inequality holds for $\rho_-$ as well.
\end{proof}

Let $B$ be a sub-square of $\Lambda$. Take any $h\in \rr$. Consider the RFIM on $\Lambda$ with plus boundary condition, and slightly tweak this model to obtain a new model by replacing $\phi_x$ with $\phi_x+h$ for each $x\in B$, keeping all other $\phi_x$ the same. Let $F_+(h)$ be the free energy of this new model, so that $F_+(0)$ is the free energy of the original model. As a function of $h$, it is easy to check that $F_+(h)$ is infinitely differentiable. Let  $F_+^{(k)}$ denote the $k^{\textup{th}}$ derivative of $F_+$. The following Taylor series expansion for the expected value of $F_+(0)$ is the second step in the proof of Lemma~\ref{sqlmm}. The convergence of the series in this lemma is a nontrivial claim, because a direct computation of the $k^{\mathrm{th}}$ derivative yields an expression with a super-exponentially growing number of terms.
\begin{lmm}\label{taylorstep}
Let $F_+(h)$ be defined as above. Then for any $h\ge 0$,
\[
\ee(F_+(h)) = \ee(F_+(0))+\sum_{k=1}^{\infty} \frac{h^k}{k!}\ee(F_+^{(k)}(0)).
\]
\end{lmm}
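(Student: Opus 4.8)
The plan is to justify term-by-term that the Taylor series of the real-analytic-looking function $h\mapsto \ee(F_+(h))$ about $h=0$ converges and equals the function itself on $[0,\infty)$. The natural tool is the integral form of Taylor's theorem with remainder: for each $N$,
\[
\ee(F_+(h)) = \ee(F_+(0)) + \sum_{k=1}^{N}\frac{h^k}{k!}\ee(F_+^{(k)}(0)) + \frac{h^N}{N!}\int_0^1 (1-s)^{N-1} N\,\ee\bigl(F_+^{(N)}(sh)\bigr)\,ds,
\]
so the claim reduces to showing that the remainder tends to $0$ as $N\to\infty$ for each fixed $h\ge 0$. Equivalently, it suffices to produce a bound of the form $|\ee(F_+^{(N)}(t))|\le A\cdot B^N$ for all $t$ in a compact interval $[0,h]$, with $A,B$ independent of $N$; then the remainder is at most $A (Bh)^N/N!\to 0$.

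The first step is to get a usable expression for $F_+^{(N)}(t)$. Since $F_+(h)$ is the free energy after the shift $\phi_x\mapsto\phi_x+h$ on $B$, we have $\partial F_+(h)/\partial h = \sum_{x\in B}\beta\sqrt{v}\,\smallavg{\sigma_x}_{+,h}$ where $\smallavg{\cdot}_{+,h}$ is the Gibbs average in the shifted model; more generally $F_+^{(k)}(h)=\sum_{x_1,\dots,x_k\in B}\partial^k F_+ /\partial\phi_{x_1}\cdots\partial\phi_{x_k}$ evaluated at the shifted field. These mixed partials of a log-partition function are (up to powers of $\beta\sqrt v$) the joint Ursell/truncated correlation functions of the spins $\sigma_{x_1},\dots,\sigma_{x_k}$. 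The key point is that each $\sigma_x\in\{-1,1\}$ is uniformly bounded, so the $k$-th order truncated correlation is bounded in absolute value by a combinatorial constant depending only on $k$ — concretely, by the sum over set partitions giving a bound like $k!\,(\beta\sqrt v)^k$ — uniformly in the configuration of the (shifted) external field and in $h$. Hence $|F_+^{(k)}(t)|\le |B|^k k!\,(\beta\sqrt v)^k$ pointwise, which after the $1/k!$ in Taylor's formula is exactly the geometric-type bound $A\cdot B^k$ with $B=|B|\,\beta\sqrt v$ that we need; taking expectations is then harmless.

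With that bound in hand, the remainder term in Taylor's formula is at most $(|B|\beta\sqrt v\, h)^N$, which is not yet summable — so I would instead phrase the conclusion as: the Taylor series has radius of convergence at least $1/(|B|\beta\sqrt v)$, and on that interval the remainder $(|B|\beta\sqrt v\,h)^N\to 0$, giving the identity for $h<1/(|B|\beta\sqrt v)$; then extend to all $h\ge0$ by analytic continuation, using that both sides are real-analytic in $h$ on all of $\rr$ (the left side because $F_+(h)$ is a finite-dimensional log of a strictly positive analytic partition function and differentiation under $\ee$ is justified by the pointwise derivative bounds, the right side because it is a convergent power series wherever it converges and the coefficient bound shows it defines an entire function — wait, it need not be entire, so the cleaner route is: both $h\mapsto \ee(F_+(h))$ and $h\mapsto\ee(F_+(0))+\sum h^k\ee(F_+^{(k)}(0))/k!$ are $C^\infty$ with the same derivatives of all orders at every point, and a $C^\infty$ function equal to its Taylor series on a neighborhood of each point of an interval equals it on the whole interval).

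The main obstacle is the uniform-in-$h$, uniform-in-$k$ control of the mixed partial derivatives $\partial^k F_+/\partial\phi_{x_1}\cdots\partial\phi_{x_k}$: a naive computation, as the paper itself warns, produces a super-exponentially growing number of terms, so one must exploit the cancellations inherent in the truncated-correlation (cumulant) structure to see that the right bound is $O(k!\,C^k)$ rather than $O((k!)^2 C^k)$ or worse. I expect this to come down to the standard fact that the $k$-th cumulant of bounded random variables is bounded by a sum over partitions of moments, each moment being at most $1$ in absolute value here, giving a bound of the Bell-number times a power, which is $\le k!\,C^k$; combined with $|B|^k$ from the number of choices of $(x_1,\dots,x_k)\in B^k$ and the $1/k!$ from Taylor, the series converges with radius of convergence at least a constant over $|B|\beta\sqrt v$, and everything goes through. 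The justification of differentiation under the expectation sign is routine given the pointwise bounds and the Gaussian integrability of the field, so it is not the real difficulty.
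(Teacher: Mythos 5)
There is a genuine gap. Your plan bounds $F_+^{(k)}(t)$ pointwise through the Ursell/cumulant expansion: each $k$-th truncated spin correlation is of order $k!\,C^k$ (the Fubini-number bound), and summing over the $|B|^k$ choices of $(x_1,\dots,x_k)$ gives $|F_+^{(k)}(t)|\le (C|B|\beta\sqrt v)^k\,k!$, so after dividing by the $k!$ from Taylor's formula you are left with a merely geometric bound $(C|B|\beta\sqrt v)^k$. That yields a \emph{finite} radius of convergence $\sim 1/(C|B|\beta\sqrt v)$. The lemma is asserted for \emph{all} $h\ge 0$, and the $h$ actually used later is $h=\sqrt{\log\log n}/(2m)$ with $|B|=m^2$, so $|B|\beta\sqrt v\,h\sim m\beta\sqrt v\sqrt{\log\log n}\to\infty$; this sits far outside your radius. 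Your analytic-continuation fallback cannot close this: continuation extends the \emph{function} $\ee(F_+(\cdot))$ past the disc of convergence, but it does not make the Taylor series at $0$ converge there, and for a fixed realization $F_+(h)=\log Z(h)$ is not entire in general (the partition function $Z$ can have complex zeros), so you cannot argue the series is automatically entire. The "cleaner route" you append (two $C^\infty$ functions with the same derivatives at every point agree) is also not a valid theorem; $e^{-1/h^2}$-type examples defeat that phrasing.

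The paper's proof is structurally different and obtains a qualitatively stronger bound on $\ee(F_+^{(k)}(u))$ by exploiting cancellations from the Gaussian expectation, not boundedness of the spins. Theorem~\ref{varformula} together with the Poincar\'e inequality (exactly as in Lemma~\ref{poinstep}) gives the $L^2$ bound
\[
\sum_{k=1}^\infty\frac1{k!}\sum_{x_1,\dots,x_k\in\Lambda}\rho_{+,u}(x_1,\dots,x_k)^2\le\beta^2 v\,n^2
\]
uniformly in $u$, where $\rho_{+,u}(x_1,\dots,x_k)=\ee\bigl(\partial^k F_+(u)/\partial\phi_{x_1}\cdots\partial\phi_{x_k}\bigr)$. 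Splitting $1/k!$ as $(1/\sqrt{k!})(1/\sqrt{k!})$ and applying Cauchy--Schwarz then yields
\[
\sum_{k=1}^\infty\frac{h^{k-1}}{(k-1)!}\bigl|\ee(F_+^{(k)}(u))\bigr|\le\biggl(\sum_{k=1}^\infty\frac{k^2h^{2k-2}m^{2k}}{k!}\biggr)^{1/2}\beta\sqrt v\,n<\infty
\]
for every $h$, uniformly in $u$, because the $k!$ left in the denominator of the first factor makes that series entire. Compared with any pointwise cumulant bound this is an improvement of roughly a factor $\sqrt{k!}$, and it is exactly this gain that forces the integral-form Taylor remainder to vanish as $N\to\infty$ for \emph{every} $h\ge 0$. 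The idea missing from your proposal is to control the expected mixed partials in $L^2$ via the Gaussian Hermite-chaos/Parseval identity rather than pointwise via Ursell combinatorics.
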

\begin{proof}
In the following, $m$ will denote the width of $B$. For $x_1,\ldots,x_k\in \Lambda$, let
\[
\rho_{+,h}(x_1,\ldots, x_k) := \ee\biggl(\frac{\partial^kF_+(h)}{\partial \phi_{x_1}\cdots \partial \phi_{x_k}}\biggr). 
\]
Proceeding exactly as in the proof of Lemma \ref{poinstep}, we get that for any $h$, 
\begin{equation}\label{ineq2}
 \sum_{k=1}^\infty\frac{1}{k!}\sum_{x_1,\ldots, x_k\in \Lambda}\rho_{+,h}(x_1,\ldots,x_k)^2\le \beta^2vn^2. 
\end{equation}
But note that 
\begin{equation}\label{fplus}
F_+^{(k)}(h) = \sum_{x_1,\ldots, x_k\in B} \frac{\partial^kF_+(h)}{\partial \phi_{x_1}\cdots \partial \phi_{x_k}}.
\end{equation}
Therefore by the Cauchy--Schwarz inequality and \eqref{ineq2}, for any nonnegative $h$ and $u$, 
\begin{align*}
&\sum_{k=1}^\infty \frac{h^{k-1}|\ee(F_+^{(k)}(u))|}{(k-1)!} \le \sum_{k=1}^\infty \sum_{x_1,\ldots, x_k\in B} \frac{kh^{k-1}}{k!} |\rho_{+,u}(x_1,\ldots,x_k)|\\
&\le \biggl(\sum_{k=1}^\infty \frac{k^2h^{2k-2}m^{2k}}{k!}\biggr)^{1/2}\biggl(\sum_{k=1}^\infty\sum_{x_1,\ldots,x_k\in B} \frac{1}{k!}\rho_{+,u}(x_1,\ldots,x_k)^2\biggr)^{1/2}\\
&\le \beta\sqrt{v}n C(m,h),
\end{align*}
where $C(m,h)$ is a finite real number that depends only on $m$ and $h$. Note that the bound has no dependence on $u$. Thus, for any $h\ge 0$,
\begin{align*}
\sum_{k=1}^\infty \int_0^h\frac{(h-u)^{k-1}}{(k-1)!} |\ee(F_+^{(k)}(u))| du &\le \int_0^h \sum_{k=1}^\infty\frac{h^{k-1}}{(k-1)!} |\ee(F_+^{(k)}(u))| du\\
&\le \beta \sqrt{v} n C(m,h)h <\infty. 
\end{align*}
This shows, in particular, that
\begin{equation}\label{flim}
\lim_{k\to \infty}\int_0^h\frac{(h-u)^{k-1}}{(k-1)!} \ee(F_+^{(k)}(u)) du  = 0.
\end{equation}
But Taylor expansion gives
\[
F_+(h) = F_+(0)+\sum_{j=1}^{k-1} \frac{h^j}{j!}F_+^{(j)}(0) + \int_0^h\frac{(h-u)^{k-1}}{(k-1)!} F_+^{(k)}(u) du. 
\]
By \eqref{flim}, the expectation of the remainder term goes to zero as $k\to \infty$. This gives the desired result.
\end{proof}

The sub-square $B$ in Lemma \ref{taylorstep} is arbitrary. We will now choose a specific sub-square $B$. Let $\ep:= 1/\log n$ and let $m_i := \ep^{-i}$ for $i\ge 1$. Let $m_0=0$. For any $k$ and any $x_1,\ldots,x_k\in \Lambda$, let
\[
d(x_1,\ldots,x_k):= \max_{1\le p<q\le k} |x_p-x_q|_{\infty},
\]
where $|x|_\infty$ denotes the $\ell^\infty$ norm of a vector $x\in \rr^2$. 
For each $i\ge 1$, let 
\[
s_i := \sum_{k=1}^\infty\frac{1}{k!}\sum_{\substack{x_1,\ldots,x_k\in \Lambda,\\  m_{i-1}\le d(x_1,\ldots,x_k)< m_i}}( \rho_+(x_1, \ldots,x_k)^2 +  \rho_-(x_1, \ldots,x_k)^2).
\]
Then by Lemma \ref{poinstep}, 
\begin{equation}\label{sisum}
\sum_{i=1}^\infty s_i \le 2\beta^{2}vn^2. 
\end{equation}
Let $L$ be the smallest integer for which $m_{L}\ge \sqrt{n}$. By the above inequality, there exists $i$ such that $1\le i\le L$ and 
\begin{equation}\label{sibd}
s_i\le \frac{2\beta^{2}vn^2}{L}\le 4\beta^{2}vn^2 \frac{\log\log n}{\log n}. 
\end{equation}
Fix such an $i$. Let $m$ be the largest integer that is strictly less than $m_i$. Since $m_i\ge m_1 =\log n>1$, it follows that $m\ge 1$. Let $\Lambda_0$ be a sub-square of $\Lambda$ with side-length $[n/m]m$. Note that
\begin{equation}\label{l0size}
|\Lambda \setminus\Lambda_0|\le 2nm\le 2nm_L\le 2n^{3/2}\log n. 
\end{equation}
Partition $\Lambda_0$ into a collection $\mb$ of $m\times m$ sub-squares in the natural way. For each $B\in \mb$, let 
\[
s_0(B) :=\sum_{k=1}^\infty\frac{1}{k!} \sum_{\substack{x_1,\ldots, x_k\in B,\\ d(x_1,\ldots,x_k)<m_{i-1} } } (\rho_+(x_1,\ldots,x_k)^2+\rho_-(x_1,\ldots, x_k)^2),
\]
and let 
\[
s_1(B) := \sum_{k=1}^\infty \frac{1}{k!}\sum_{\substack{x_1,\ldots,x_k\in B,\\ m_{i-1}\le d(x_1,\ldots,x_k)< m_{i}} } (\rho_+(x_1,\ldots,x_k)^2+\rho_-(x_1,\ldots, x_k)^2).
\]
Notice that
\eq{
|\mb| &= \biggl[\frac{n}{m}\biggr]^2 \ge \frac{n^2}{4m^2}. 
}
Thus, by \eqref{sibd}, 
\begin{equation}\label{s1bd}
\bar{s}_1:= \frac{1}{|\mb|}\sum_{B\in \mb} s_1(B)\le \frac{s_i}{|\mb|}\le 16\beta^{2}v m^2 \frac{\log\log n}{\log n}, 
\end{equation}
and  by \eqref{sisum},
\begin{equation}\label{s0bd}
\bar{s}_0:= \frac{1}{|\mb|}\sum_{B\in \mb} s_0(B) \le \frac{1}{|\mb|}\sum_{j=1}^\infty s_j\le 8\beta^{2}vm^2.
\end{equation}
Let us now define 
\[
K:=(\log n)^{1/12}.
\] 
The value of $K$ will remain fixed throughout the rest of the proof. 
Let $\mb_0$ be the set of all $B\in \mb$ such that $s_1(B)\le K^2\bar{s}_1$ and $s_0(B)\le K^2\bar{s}_0$. Then by Markov's inequality,
\begin{equation}\label{mb1}
|\mb\setminus \mb_0|\le \frac{2|\mb|}{K^2}.
\end{equation}
The third step in the proof of Lemma \ref{sqlmm} is the following estimate for $B\in \mb_0$.
\begin{lmm}\label{remainstep}
Let $\mb_0$, $m$ and $K$ be as above. Fixing a choice of $B\in \mb_0$, define $F_+(h)$ as in the paragraph preceding the statement of Lemma \ref{taylorstep}. Let  
\[
h := \frac{\sqrt{\log \log n}}{2m}.
\]
Then 
\[
\biggl|\ee(F_+'(0)) - \frac{\ee(F_+(h))-\ee(F_+(0))}{h}\biggr| \le  CK\beta \sqrt{v} m^2 \frac{\sqrt{\log\log n}}{(\log n)^{1/4}},
\]
and the same bound holds for $F_-$.
\end{lmm}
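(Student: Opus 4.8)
The plan is to use the Taylor expansion from Lemma \ref{taylorstep}, which gives
\[
\ee(F_+'(0)) - \frac{\ee(F_+(h))-\ee(F_+(0))}{h} = -\sum_{k=2}^\infty \frac{h^{k-1}}{k!}\ee(F_+^{(k)}(0)),
\]
and then estimate the right-hand side by splitting the sum over multi-indices $(x_1,\ldots,x_k)\in B^k$ according to whether $d(x_1,\ldots,x_k) < m_{i-1}$ or $m_{i-1}\le d(x_1,\ldots,x_k) < m_i$. (Since $B$ has width $m < m_i$, no larger diameters occur, so these two ranges exhaust all terms.) Using \eqref{fplus}, write
\[
\ee(F_+^{(k)}(0)) = \sum_{x_1,\ldots,x_k\in B}\rho_+(x_1,\ldots,x_k) = A_k + B_k,
\]
where $A_k$ collects the terms with $d < m_{i-1}$ and $B_k$ the terms with $m_{i-1}\le d < m_i$.

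For the two pieces I would apply Cauchy--Schwarz separately, in the same style as the bound \eqref{badbd} in the sketch, but now using the improved $L^2$ estimates available for $B\in\mb_0$. For the $A_k$ piece, there are at most $m_{i-1}^2 = \ep^{-2(i-1)}$ choices for each $x_j$ once $x_1$ is fixed, so the number of relevant $k$-tuples is at most $m^2 m_{i-1}^{2(k-1)}$; pairing this with $s_0(B)\le K^2\bar s_0 \le CK^2\beta^2 v m^2$ from \eqref{s0bd} gives, after summing the resulting exponential series in $h m_{i-1}$,
\[
\Bigl|\sum_{k\ge 2}\tfrac{h^{k-1}}{k!}A_k\Bigr| \le CK\beta\sqrt v\, m^2\cdot h\, m_{i-1}\cdot e^{h^2 m_{i-1}^2}.
\]
Since $m_{i-1} = \ep m_i \le \ep (m+1)$ with $\ep = 1/\log n$, we have $h m_{i-1} \le C\sqrt{\log\log n}/\log n$ and $h^2 m_{i-1}^2\to 0$, so this term is $\le CK\beta\sqrt v\, m^2 \sqrt{\log\log n}/\log n$, which is comfortably within the claimed bound. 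For the $B_k$ piece, the number of relevant $k$-tuples is at most $m^{2k}$ (the crude bound), but now we pair with the small quantity $s_1(B)\le K^2\bar s_1 \le CK^2\beta^2 v m^2\log\log n/\log n$ from \eqref{s1bd}:
\[
\Bigl|\sum_{k\ge 2}\tfrac{h^{k-1}}{k!}B_k\Bigr| \le CK\beta\sqrt v\, m\sqrt{\tfrac{\log\log n}{\log n}}\Bigl(\sum_{k\ge 2}\tfrac{h^{2k-2}m^{2k}}{k!}\Bigr)^{1/2} \le CK\beta\sqrt v\, m^2 e^{h^2 m^2}\sqrt{\tfrac{\log\log n}{\log n}}.
\]
Here is where the specific choice $h = \sqrt{\log\log n}/(2m)$ matters: $h^2 m^2 = (\log\log n)/4$, so $e^{h^2 m^2} = (\log n)^{1/4}$, giving a bound of order $K\beta\sqrt v\, m^2 \sqrt{\log\log n}\,(\log n)^{1/4}/\log n = K\beta\sqrt v\, m^2\sqrt{\log\log n}/(\log n)^{3/4}$. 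Adding the two pieces yields the stated bound (the $(\log n)^{3/4}$ denominator is even better than the claimed $(\log n)^{1/4}$, leaving room to absorb constants).

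The main obstacle, and the step requiring care, is the bookkeeping in the Cauchy--Schwarz application: one must correctly count the number of $k$-tuples in each diameter range and match the combinatorial factor $1/k!$ against the Hermite-coefficient normalization so that the series $\sum_k h^{2k-2}(\#\text{tuples})/k!$ genuinely converges to the asserted exponential. A secondary subtlety is ensuring that the Taylor remainder identity of Lemma \ref{taylorstep} is applicable here and that term-by-term manipulation of the (absolutely convergent, by the argument in that lemma's proof) series is justified; the absolute convergence established via $C(m,h)$ in the proof of Lemma \ref{taylorstep} handles this. One should also double-check that $m_{i-1}$ is well-defined and that the constraint $i\le L$ guarantees $m\le\sqrt n$, so that the error terms involving $m$ are genuinely controlled by powers of $n$; this was arranged in the setup preceding the lemma. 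The argument for $F_-$ is identical since $s_0(B)$ and $s_1(B)$ already incorporate $\rho_-$.
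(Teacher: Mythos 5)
Your proposal is correct and follows essentially the same route as the paper: invoke Lemma \ref{taylorstep}, split the sum over tuples in $B$ by whether $d(x_1,\ldots,x_k)<m_{i-1}$ or $m_{i-1}\le d<m_i$, and apply Cauchy--Schwarz to each piece using $s_0(B)\le K^2\bar s_0$ and $s_1(B)\le K^2\bar s_1$ together with \eqref{s0bd} and \eqref{s1bd}. Two small points that do not affect the argument: the count of tuples in the small-diameter piece should use a square of side $2m_{i-1}-1$ around $x_1$ (so $(2m_{i-1}-1)^{2(k-1)}$ rather than $m_{i-1}^{2(k-1)}$, changing only constants, and note $m_0=0$ so this piece vanishes when $i=1$), and your final simplification of the large-diameter piece contains an arithmetic slip --- $e^{h^2m^2}\sqrt{\log\log n/\log n}=(\log n)^{1/4}\sqrt{\log\log n}/(\log n)^{1/2}=\sqrt{\log\log n}/(\log n)^{1/4}$, which is exactly the claimed rate, not the better $(\log n)^{-3/4}$ you asserted.
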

\begin{proof}
By Lemma \ref{taylorstep},
\[
\biggl|\ee(F_+'(0)) - \frac{\ee(F_+(h))-\ee(F_+(0))}{h}\biggr|\le \sum_{k=2}^{\infty} \frac{h^{k-1}}{k!}|\ee(F_+^{(k)}(0))|.
\]
By \eqref{fplus} and the fact that $m<m_i$,
\begin{align}
&\sum_{k=2}^{\infty} \frac{h^{k-1}}{k!}|\ee(F_+^{(k)}(0))| \le \sum_{k=2}^\infty  \sum_{x_1,\ldots,x_k\in B} \frac{h^{k-1}}{k!}|\rho_+(x_1,\ldots, x_k)|\nonumber \\
&= \sum_{k=2}^\infty \sum_{\substack{x_1,\ldots,x_k\in B,\\ d(x_1,\ldots,x_k)<m_{i-1}}} \frac{h^{k-1}}{k!}|\rho_+(x_1,\ldots,x_k)| \nonumber \\
&\qquad + \sum_{k=2}^\infty\sum_{\substack{x_1,\ldots,x_k\in B,\\ m_{i-1}\le d(x_1,\ldots,x_k)<m_{i}}} \frac{h^{k-1}}{k!}|\rho_+(x_1,\ldots,x_k)|.\label{mainbreak}
\end{align}
The number of ways of choosing $x_1,\ldots, x_k\in B$ is $m^{2k}$. Therefore by the Cauchy--Schwarz inequality, the fact that $B\in \mb_0$, and the bound \eqref{s1bd}, we get
\eq{
&\sum_{k=2}^\infty\sum_{\substack{x_1,\ldots,x_k\in B,\\ m_{i-1}\le d(x_1,\ldots,x_k)<m_{i}}} \frac{h^{k-1}}{k!}|\rho_+(x_1,\ldots,x_k)| \\
&\le  \biggl(\sum_{k=2}^\infty \frac{h^{2k-2}m^{2k}}{k!}\biggr)^{1/2}\sqrt{s_1(B)}\le  4K\beta\sqrt{v} m^2e^{h^2m^2} \sqrt{\frac{\log\log n}{\log n}}\\
&= 4K\beta \sqrt{v} m^2 \frac{\sqrt{\log\log n}}{(\log n)^{1/4}}. 
}
If $i=1$, then there is no $x_1,\ldots,x_k\in B$ such that $d(x_1,\ldots,x_k)<m_{i-1}$. Therefore, in this case, the first sum on the right side in \eqref{mainbreak} is zero. Suppose that $i>1$. Then the number of ways of choosing $x_1,\ldots, x_k\in B$ such that $d(x_1,\ldots, x_k)< m_{i-1}$ is bounded above by $m^2 (2m_{i-1}-1)^{2(k-1)}$, since $x_1$ can be chosen in $m^2$ ways, and given $x_1$, the constraint $d(x_1,\ldots,x_k)<m_{i-1}$ implies that $x_2,\ldots,x_k$ have to be within a square of side-length $2m_{i-1}-1$ centered at $x_1$. Since $\ep = 1/\log n< 1/2$, 
\[
2m_{i-1}-1 \le 2\ep m_i -1 \le 2\ep(m+1)-1\le 2\ep m. 
\]
Thus, by the Cauchy--Schwarz inequality, the fact that $B\in \mb_0$, and the bound \eqref{s0bd}, we get
\eq{
&\sum_{k=2}^\infty \sum_{\substack{x_1,\ldots,x_k\in B,\\ d(x_1,\ldots,x_k)<m_{i-1}}} \frac{h^{k-1}}{k!}|\rho_+(x_1,\ldots,x_k)|\\
&\le \biggl(\sum_{k=2}^\infty \frac{h^{2k-2}(2\ep)^{2(k-1)}m^{2k} }{k!}\biggr)^{1/2}\sqrt{s_0(B)}\\
&\le CK\beta\sqrt{v} m^2 \biggl(\sum_{k=2}^\infty \frac{(2\ep h m)^{2(k-1)}}{k!}\biggr)^{1/2}\\ 
&\le  CK \beta\sqrt{v}m^2\ep h m e^{2\ep^2h^2m^2}\le CK \beta \sqrt{v}m^2\frac{\sqrt{\log \log n}}{\log n}.
}
Combining the above steps, we get the claimed inequality. Retracing the steps, we get the same bound  for $F_-$.
\end{proof}

We are now ready to prove Lemma \ref{sqlmm}.
\begin{proof}[Proof of Lemma \ref{sqlmm}]
Let $h$ and $B$ be as in Lemma \ref{remainstep}. Consider the RFIM on $\Lambda$ with plus boundary condition. Modify the model as in the paragraph preceding Lemma \ref{taylorstep}, by adding $h$ to $\phi_x$ for all $x\in B$. Then, further modify the model by removing the links between $B$ and $\Lambda \setminus B$. Let $G_+(h)$ be the free energy of the resulting model after these two modifications. Then 
\[
G_+(h) = G_0(h) + R,
\]
where $G_0(h)$ is the free energy of the RFIM on $B$ with zero boundary condition and $\phi_x$ replaced by $\phi_x+h$ in the Hamiltonian, and $R$ is the free energy of the RFIM on $\Lambda\setminus B$ which has plus boundary condition on the part of $\partial (\Lambda \setminus B)$ that lies outside $B$,  and zero boundary condition on the part of $\partial(\Lambda \setminus B)$ that belongs to $B$. Note that $R$ does not depend on $h$. Thus,
\[
G_+(h)-G_+(0)= G_0(h)-G_0(0). 
\]
On the other hand, by the straightforward inequality
\[
\biggl|\log \sum_{\sigma} e^{-\beta H_1(\sigma)} - \log \sum_{\sigma}e^{-\beta H_2(\sigma)}\biggr|\le \beta \max_\sigma|H_1(\sigma)-H_2(\sigma)|
\]
that holds for any two Hamiltonians $H_1$ and $H_2$, and the fact that we are deleting at most $4m$ links, it follows that $|F_+(h)-G_+(h)|\le 4\beta m$ for any~$h$. Thus,
\eq{
&|(F_+(h)-F_+(0))- (G_0(h)-G_0(0))|\\
 &= |(F_+(h)-F_+(0))- (G_+(h)-G_+(0))|\le 8\beta  m. 
}
Lastly, observe that
\[
F_+'(0) = \beta \sqrt{v}\sum_{x\in B} \smallavg{\sigma_x}_+, 
\]
where $\smallavg{\sigma_x}_+$ is the quenched expectation of $\sigma_x$ in our original RFIM on $\Lambda$ with plus boundary condition. 
Combining the above steps and applying Lemma \ref{remainstep},  we get
\eq{
&\biggl|\ee\biggl(\sum_{x\in B} \smallavg{\sigma_x}_+\biggr) - \frac{\ee(G_0(h)-G_0(0))}{\beta \sqrt{v} h}\biggr| \\
&\le \frac{Cm^2}{\sqrt{v\log \log n}} + CKm^2 \frac{\sqrt{\log\log n}}{(\log n)^{1/4}}\le \frac{C(1+v^{-1/2})m^2}{\sqrt{\log \log n}}.
}
Let $\Lambda_1$ be the union of all $B\in \mb_0$. Let
\[
\theta := \frac{|\mb_0|\ee(G_0(h)-G_0(0))}{\beta \sqrt{v} h}.
\]
Then the above inequality implies that
\eq{
\biggl|\ee\biggl(\sum_{x\in \Lambda_1} \smallavg{\sigma_x}_+\biggr) - \theta\biggr|&\le \frac{C(1+v^{-1/2})n^2}{\sqrt{\log \log n}}.
}
By \eqref{l0size} and \eqref{mb1}, 
\[
|\Lambda\setminus \Lambda_1|\le \frac{2n^2}{K^2} + 2n^{3/2}\log n\le \frac{Cn^2}{(\log n)^{1/6}}. 
\]
Thus,
\eq{
\biggl|\ee\biggl(\sum_{x\in \Lambda} \smallavg{\sigma_x}_+\biggr) - \theta\biggr|&\le \frac{C(1+v^{-1/2})n^2}{\sqrt{\log \log n}} + \frac{Cn^2}{(\log n)^{1/6}}\\
&\le \frac{C(1+v^{-1/2})n^2}{\sqrt{\log \log n}}.
}
Proceeding exactly as above but with minus boundary condition, we get the same inequality for $\smallavg{\sigma_x}_-$, with the same $\theta$. Thus,
\[
\biggl|\ee\biggl(\sum_{x\in \Lambda} (\smallavg{\sigma_x}_+- \smallavg{\sigma_x}_-)\biggr)\biggr|\le \frac{C(1+v^{-1/2})n^2}{\sqrt{\log \log n}}.
\]
This completes the proof.
\end{proof}

Finally, we are ready to prove Theorem \ref{mainthm}. We will now revert back to the setting of Theorem \ref{mainthm}, where $\Lambda$ is an arbitrary finite subset of $\zz^2$ instead of a square. 
\begin{proof}[Proof of Theorem \ref{mainthm}]
It suffices to prove the theorem assuming that $\beta\in (0,\infty)$, because the case $\beta=0$ is trivial, and the inequality for $\beta=\infty$ can be deduced by taking a limit after we have proved the theorem for finite $\beta$, since the upper bound does not depend on $\beta$ and $\Lambda$ is a finite set (which implies that $\smallavg{\sigma_x}_+$ is a continuous function of $\beta$ as $\beta$ varies in $[0,\infty]$).

Let $\Lambda'$ be an $(n-1)\times (n-1)$ square containing $x$. Then $\Lambda'\subseteq \Lambda$. Let $\smallavg{\sigma_x}_{\Lambda,+}$ be the quenched expectation of $\sigma_x$ under plus boundary condition on $\Lambda$. Similarly, $\smallavg{\sigma_x}_{\Lambda',+}$ be the quenched expectation of $\sigma_x$ in the RFIM on $\Lambda'$ with plus boundary condition. The point $x$ can be made to take any position within the square $\Lambda'$ by choosing $\Lambda'$ suitably. Thus, Lemma \ref{sqlmm} implies that there exists some $\Lambda'$ as above, for which
\[
\ee(\smallavg{\sigma_x}_{\Lambda',+} - \smallavg{\sigma_x}_{\Lambda',-})\le \frac{C(1+v^{-1/2})}{\sqrt{\log \log n}},
\]
where $C$ is some universal constant. By \eqref{sigma2}, 
\eq{
\smallavg{\sigma_x}_{\Lambda',+} - \smallavg{\sigma_x}_{\Lambda',-} &\ge \smallavg{\sigma_x}_{\Lambda,+} - \smallavg{\sigma_x}_{\Lambda,-}\ge 0,
}
and  by \eqref{sigma1}, for any $\gamma, \gamma'\in \Gamma$,
\eq{
|\smallavg{\sigma_x}_{\Lambda,\gamma}- \smallavg{\sigma_x}_{\Lambda,\gamma'}|\le \smallavg{\sigma_x}_{\Lambda,+} - \smallavg{\sigma_x}_{\Lambda,-}.
}
This completes the proof.
\end{proof}

\section*{Acknowledgments}
I am grateful to Michael Aizenman and Ron Peled for motivating discussions, and to Hugo Duminil-Copin for checking the proof. I thank the referee for several useful suggestions.

\end{document}